\definecolor{lightgray}{gray}{0.9}
\newcommand{\bigO}{\ensuremath{\mathcal{O}}}
\begin{document}

\title{Complexity Analysis of Reversible Logic Synthesis}


\author{Anupam Chattopadhyay\inst{1} \and Chander Chandak\inst{2} \and Kaushik Chakraborty\inst{3}}
\institute{MPSoC Architectures Research Group, RWTH Aachen University, Germany \and IIT Kharagpur, India \and Indian Statistical Institute, Kolkata, India}

\sloppy

\maketitle

\begin{abstract}
Reversible logic circuit is a necessary construction for achieving ultra low power dissipation as well as for prominent post-CMOS computing technologies such as Quantum computing. 
Consequently automatic synthesis of a Boolean function using elementary reversible logic gates has received significant research attention in recent times, creating the domain of 
reversible logic synthesis. In this paper, we study the complexity of reversible logic synthesis. The problem is separately studied for bounded-ancilla and ancilla-free optimal 
synthesis approaches. The computational complexity for both cases are linked to known/presumed hard problems. Finally, experiments are performed with a shortest-path based reversible 
logic synthesis approach and a (0-1) ILP-based formulation.
\end{abstract}

%

\section{Introduction} \label{sec:introduction}
\begin{sloppypar}
Asymptotically zero power dissipation can be achieved by performing computation in a reversible manner, which implies~\cite{Bennett} that 
logical reversibility must be supported to achieve physical reversibility. Consequently, major research attention is given towards the synthesis 
of a Boolean function using basic reversible logic gates, a problem otherwise known as reversible logic synthesis. 
\end{sloppypar}

\subsection{Background}
\begin{sloppypar}
An $n$-variable Boolean function $f$ is a mapping $f : GF(2^n) \rightarrow GF(2)$. An alternative representation of a Boolean function $f$ is a mapping 
$f : \{0, 1\}^n \rightarrow \{0, 1\}$, which is known as the truth table representation. Using any basis of $GF(2^n)$, we can express each $x \in GF(2^n)$ 
as an $n$-tuple $(x_1 x_2 \ldots x_n)$, $x_i \in GF(2)$, $i = 1, \ldots, n$. Thus we can derive the truth table representation from the former representation. 
\end{sloppypar}

\begin{sloppypar}
Alternatively, an $n$-variable Boolean function $f(x_1, \ldots, x_n)$ can be considered to be 
a multivariate polynomial over $GF(2)$. This polynomial can be expressed as a sum-of-products 
representation of all distinct $k$-th order products $(0 \leq k \leq n)$ of the variables. This 
representation of $f$ is called the {\em Algebraic Normal Form} (ANF) of $f$.
\end{sloppypar}

\begin{sloppypar}
An $n$-variable Boolean function is \textit{reversible} if all its output patterns map uniquely to an input pattern and vice-versa. It can be expressed as 
an $n$-input, $n$-output bijection or alternatively, as a permutation function over the truth value set $\{0, 1, \ldots 2^{n-1}\}$. 
\end{sloppypar}

\begin{sloppypar}
The problem of reversible logic synthesis is to map such a reversible Boolean function on a reversible logic gate library. Reversible gates are characterized by their 
implementation cost in Quantum technologies, which is dubbed as Quantum Cost (QC) \cite{maslov_benchmark,barenco}. Prominent classical reversible logic gates include 
NOT, Feynman (or CNOT), Toffoli (or CCNOT) gates. Controlled NOT gates can be generalized as $Tof_n$ gate, where first $n-1$ variables are used as control lines. There 
are other reversible logic gate families such as Fredkin gate family ($Fred_n$), where first $n-2$ variables are used as control lines and last $2$ lines undergo swap, 
when the logical conjunction of first $n-2$ variables is true. 
\end{sloppypar}

\begin{itemize}
\item NOT gate: 
\begin{equation}
f(A) = \overline{A}
\end{equation}
\item Controlled NOT gate: Also known as Feynman or CNOT gate. 
\begin{equation}
f(A) = A, f(B) = A \oplus B
\end{equation}
\item Controlled Controlled NOT gate: Also known as Toffoli gate. 
\begin{equation}
f(A) = A, f(B) = B, f(C) = A\cdot B \oplus C
\end{equation}
This gate can be generalized with $Tof_n$ gate, where first $n-1$ variables are used as control lines. NOT gate is denoted as $Tof_0$ gate.
\item Swap gate: 
\begin{equation}
f(A) = A, f(B) = A
\end{equation}
This gate can be generalized to the Fredkin gate family ($Fred_n$), where first $n-2$ variables are used as control lines and last $2$ lines undergo swap.
\begin{equation}
f(A) = A, f(B) = \overline{A}\cdot B + A\cdot C, f(C) = \overline{A}\cdot C + A\cdot B
\end{equation}
\end{itemize}

\begin{sloppypar}
Multiple sets of reversible gates form an universal gate library for realizing classical Boolean functions such as, (i) NCT: NOT, CNOT, Toffoli. 
(ii) NCTSF: NOT, CNOT, Toffoli, SWAP, Fredkin. (iii) GT: $Tof_n$. (iv) GTGF: $Tof_n$ and $Fred_n$. Generally, gates with more control lines incur higher QC.
\end{sloppypar}

\begin{figure}[hbt]
 \begin{center}
   \includegraphics[width=90mm]{./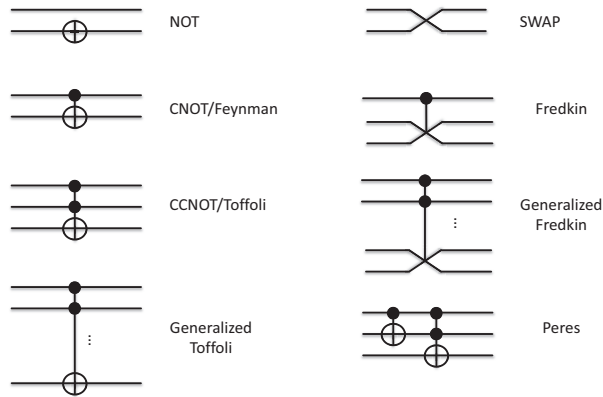}
   \caption{Reversible Logic Gates}
   \label{fig:basic_gates}
 \end{center}
\end{figure}

\begin{sloppypar}
Reversible logic synthesis begins from a given $n$-variable Boolean function, which can be irreversible. The first step is to convert it to a reversible Boolean function 
by adding additional constant input bits (known as \textit{ancilla inputs}). If these constant input bits are not restored to their original values at the end of computation, 
then these are referred as \textit{ancilla}.
\end{sloppypar}

\section{Related Work and Motivation} \label{sec:related}
\begin{sloppypar}
Reversible logic synthesis methods can be grossly classified as following. First, \textit{optimal} implementation is found by making a step-by-step exhaustive enumeration 
or by formulating the reversible logic synthesis as a SAT problem~\cite{wille_sat}. Optimal implementation up to $4$-variable Boolean functions are 
known~\cite{shende_3var,golubitsky_4var}. Pseudo-optimal synthesis for linear Boolean functions is proposed at~\cite{rev_linear}. Second, \textit{transformation-based 
methods}~\cite{mmd,chander_la_mmd}, which apply controlled transformations to map output Boolean functions to input Boolean functions. Third category includes methods based 
on {\em decision diagrams}~\cite{wille_bdd}, each node of the decision diagram is converted to an equivalent reversible circuit structure. While these methods are known to 
scale for large Boolean functions, they introduce too many garbage bits. Finally, {\em search-based methods} start from an Exclusive Sum-of-Product (ESOP) representation of 
the Boolean function. The ideal candidates for reversible logic realization at each depth is chosen via heuristic search~\cite{esop_gupta}. Another synthesis, based on ESOP 
formulation, is proposed in~\cite{esop_cube_nayeem}. There a heuristic ESOP minimization tool~\cite{mischenko_xorcism} is used for generating reversible circuits with fixed 
number of ancilla lines. For a detailed survey of reversible logic synthesis, reader is kindly referred to~\cite{wille:rev_book,markov_survey}.
\end{sloppypar}

\begin{sloppypar}
\textbf{Motivation:} The promise held by Quantum computing is driving the research of reversible logic synthesis. Despite several major advancements of this research, the 
complexity study of the problem is not done so far. This is important since, optimal, comprehensive reversible circuit synthesis is done up to only all $4$-variable 
Boolean functions~\cite{golubitsky_4var} and up to selected $6$-variable Boolean functions~\cite{wille_sat,wille_smt}.
\end{sloppypar}

\begin{sloppypar}
The rest of this paper is organized as following. In the following section~\ref{sec:obj}, the performance objectives for reversible logic synthesis are discussed. 
In section~\ref{sec:ancilla_bounds}, analayis of computational complexity for bounded-ancilla reversible logic synthesis is presented. 
This is followed by the computational complexity analysis for ancilla-free reversible logic synthesis in section~\ref{sec:ancilla_free_bounds}. 
Two different formulations for exactly solving the reversible logic synthesis problem is presented in sections~\ref{sec:shortest_path} and~\ref{sec:ilp_formulation} 
respectively. Experimental results with these methods are reported in section~\ref{sec:results}. The paper is concluded and future works are outlined in section~\ref{sec:conclusion}.
\end{sloppypar}

\section{Performance Objectives} \label{sec:obj}
\begin{sloppypar}
Before presenting the complexity analysis of reversible logic synthesis, it is necessary to develop a better understanding of the performance objectives. 
The major performance objectives of reversible logic synthesis are gate count, logical depth, Quantum cost, ancilla and garbage count. 
\end{sloppypar}

\subsubsection{Ancilla Count}
\begin{sloppypar}
In literature, either ancilla count or garbage count or total line count is reported. Ancilla-free synthesis is achievable by several methods~\cite{mmd,wille_sat}, whereas 
several synthesis methods compromise ancilla count for achieving scalability~\cite{wille_bdd} or lowering QC~\cite{esop_cube_nayeem}. 
\end{sloppypar}

\subsubsection{Gate Count}
\begin{sloppypar}
Gate count is used as a major performance objective in many synthesis flows~\cite{mmd,esop_cube_nayeem} due to its simple cost annotation. However, several 
composite gates (e.g. Peres gates) are presented both as a single gate or multiple gates in literature. For example, Peres gates are considered as a single 
gate for multiple benchmark circuits presented in~\cite{maslov_benchmark}, whereas the gate counts reported for the Quantum arithmetic circuits in~\cite{takahashi_quant_add} 
count Peres gates as a collection of Toffoli and CNOT gates. To avoid confusion, library-specific gate counts are reported in~\cite{esop_gupta}.
\end{sloppypar}

\subsubsection{Quantum Cost}
\begin{sloppypar}
Quantum Cost (QC) for a reversible gate is technology-dependent. A set of primitive gates are actually implemented on experimental quantum circuits, which are used to build more 
complex gates and thus the QC is computed by simply adding those primitive gates. Early studies of primitive reversible gates are presented at~\cite{barenco}, which is adopted 
for the derivation of QC for generalized Toffoli and Fredkin gates in~\cite{maslov_benchmark}. Recent advances at experimental quantum computing~\cite{universal_qc_surface} shows 
that the QC of negative-control gates are comparable to that of positive-control gates, thereby spawning several works on reversible logic synthesis with mixed-polarity gate 
libraries~\cite{kerntopf_4var_mixed_pol}. An improved circuit, in terms of less Quantum cost, for generalized Toffoli gates is presented in~\cite{maslov_improved_qc}. In a 
significant new result, improved QC for generalized Peres gates as well for generalized Toffoli gates are presented in~\cite{kerntopf_gen_peres}. Reduced QC values are 
reported by introducing a new gate library in~\cite{wille_ncvv1}.
\end{sloppypar}

\begin{sloppypar}
Though there are established performance metrics for evaluating the quality of a reversible circuit, yet the continuously evolving Quantum technology needs to be considered 
for a fair benchmarking. The synthesis tools should also be adaptable to new performance objectives and their trade-offs.
\end{sloppypar}

\section{Complexity Analysis of Bounded-Ancilla, Exact Reversible Logic Synthesis} \label{sec:ancilla_bounds}
\begin{sloppypar}
The complexity of exact Sum-of-Product (SOP) minimization, in terms of number of literals, where the input is an incomplete or complete truth-table specification is studied 
in~\cite{masek_np_complete} and~\cite{umans_complexity}. It is shown that the problem - {\em Is there a SOP representation of a given Truth-table specification with at most 
k literals?} - is NP-complete. The proof is based on the fact that the well-known exact procedure of SOP minimization, Quine-McCluskey includes a function call to the minimum 
cover problem. Minimum cover problem is shown to be NP-complete in 1972~\cite{karp_np_complete}. We introduce few definitions, which are needed for rest of this section.
\end{sloppypar}

\begin{definition}
Given a set of elements $\mathcal{U} = \{1,2,\cdots,m\}$ and a set $\mathcal{S}$ of $n$ sets, whose union equals $\mathcal{U}$, the minimum set covering problem is to identify 
the smallest subset of $\mathcal{S}$, the union of which contains all elements of $\mathcal{U}$.
\end{definition}

\begin{definition}
Exact cover or exact set cover problem is a decision problem, where the goal is to find a subset of $\mathcal{S}$, the union of which contains all elements and each element 
of $\mathcal{U}$ is covered exactly by one subset. Exact cover problem returns false if no such set exists.
\end{definition}

\begin{sloppypar}
For example, given a set $\mathcal{U} = \{1,2,3,4,5\}$ and $\mathcal{S} = \{\{1,2\}, \{2,3,4,5\}, \{3,4\}, \{5\}\}$, $\{\{1,2\}, \{2,3,4,5\}\}$ returns the minimum set cover 
and $\{\{1,2\}, \{3,4\}, \{5\}\}$ returns an exact cover. The exact cover problem and the decision version of 
set covering problem are NP-complete~\cite{karp_np_complete}.
\end{sloppypar}

\begin{sloppypar}
The worst-case complexity of SOP formulation ($2^{n-1}$) is more than the corresponding complexity of the ESOP formulation $3\cdot2^{n-3}$~\cite{sasao_esop_bound}, for an 
$n$-variable Boolean function. This led researchers to look for efficient ESOP minimization flows and several exact ESOP minimization algorithms have been 
presented~\cite{perkowski_helliwell,steinbach_snf}. For the computational complexity of reversible logic synthesis, 
exact ESOP minimization holds a clue, since for an $n$-variable Boolean functions, an ESOP formulation with $k$ cubes directly corresponds to a reversible circuit 
realization with $k$ $Tof_n$ gates and at most $n$ ancilla lines, each corresponding to one output function. It might be argued that, the function of an ESOP is independent 
of the order of its product terms whereas, the function of a reversible circuit is affected by the order of the sequence of $Tof_n$ gates. However, that does not hold true 
for a bounded-ancilla reversible logic circuit, where the ancillae ensure that the inputs remain unchanged. The NOT gates inserted between the $Tof_n$ gates can be ignored if 
mixed-polarity $Tof_n$ gates are considered.
\end{sloppypar}

This is exemplarily shown in the reversible circuit realizations depicted in Fig.~\ref{fig:esop_tof} for the Boolean 
functions $f(a, b, c) = a \oplus a\cdot b \oplus b\cdot c$ and $f(a, b, c) = a \oplus c \oplus a\cdot b$ respectively.

\begin{figure}[hbt]
 \begin{center}
   \includegraphics[width=100mm]{./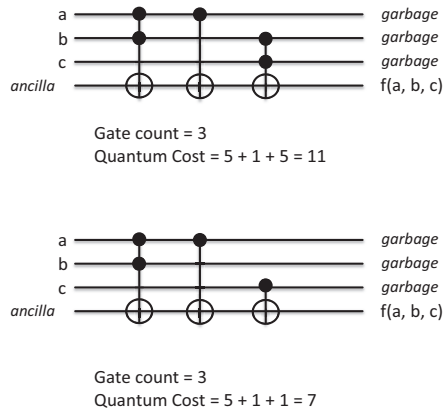}
   \caption{ESOP-based Reversible Circuit Realization}
   \label{fig:esop_tof}
 \end{center}
\end{figure}

\begin{sloppypar}
Exact minimization of ESOP is well-studied in literature~\cite{sasao_exact,steinbach_snf}. In~\cite{sasao_exact}, exact ESOP minimization is done based on the shortest path 
enumeration of a decision diagram, which leads to the complexity of $\bigO({2^{3^n}})$ for an $n$-variable Boolean function. Computational complexity analysis is not done there. 
Another exact ESOP minimization is proposed by Steinbach and Mishchenko~\cite{steinbach_snf}. There, the exact ESOP formulation is reduced to a problem of a so called 
{\em coverage matrix}. We adopt the exact minimization procedure from~\cite{steinbach_snf} for studying the computational complexity. In the following, we briefly describe 
the formulation. 
\end{sloppypar}

\begin{sloppypar}
First, the SOP or truth table formulation is converted to an ESOP formulation. For a SOP to ESOP formulation, the following transformation can be used.
\begin{equation}
a + b = a \oplus b \oplus a\cdot b 
\end{equation}
It is straightforward to show that for a SOP formulation with $k+1$ cubes, an ESOP formulation with $2^{k+1}-1$ literals is obtained. Alternatively, 
constructing an ANF from a Boolean truth table specification can be done using $O(n2^n)$ operations with standard algorithm. 
\end{sloppypar}

\begin{sloppypar}
The input to the algorithm is the truth table $f = [f(0) f(1) f(2) \ldots f(2^n -1)]$, and the output is the coefficient vector of the 
canonical Algebraic Normal Form (ANF), represented as $C = [c_0 c_1 c_2 \ldots c_{2^n -1}]$. Only if $c_j$ = 1, where 0 $\leq$ j $\leq$ 
$2^n - 1$, then the monomial $x_0^{j_0}x_1^{j_1}\cdots x_{n-1}^{j_{n-1}}$ exists in the ANF of $f$, where ($j_0,j_1\cdots j_{n-1}$) is 
the binary representation of index $j$. For an $n$-variable Boolean function, $C = fA_n$, where $A_n$ can be computed as following.
$
A_n = \begin{bmatrix}
       A_{n-1} & A_{n-1} \\
        0       & A_{n-1}
       \end{bmatrix}
, where A_0 = 1.
$
\end{sloppypar}

\begin{figure}[hbt]
  \begin{center}
    \includegraphics[width=80mm]{./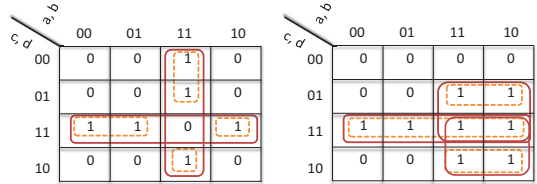}
  \end{center}
  \caption{ESOP Minimization via Karnaugh Map}
  \label{fig:off_cover}
\end{figure}

\begin{sloppypar}
First, the SOP or truth table formulation is converted to an ESOP formulation. Based on the ESOP formulation, a Special Normal Form (SNF) is introduced in~\cite{steinbach_snf}. 
It was shown that it is also a canonical ESOP representation like ANF. Whereas ANF is a positive-polarity expression, SNF is mixed-polarity. For exact ESOP minimization, 
a coverage matrix is constructed, where the rows and columns correspond to all possible ($3^n$) cubes of an $n$-variable Boolean function $f$ and the cubes of SNF($f$) 
respectively. The goal of exact ESOP minimization is to include the fewest row-cubes from $f$ to cover all the column-cubes from SNF($f$), which reduces to a variation 
of the minimum set covering problem. However, there is a subtle difference between the minimum set covering problems encountered in ESOP and SOP minimization. For the SOP 
formulation, the goal is to cover the ON-set of output function with minimum number of cubes, where multiple cubes can cover a ON-value as in {\em inclusive OR}. For ESOP 
formulation, an ON-value can be covered by odd number of cubes and an OFF-value can be covered by even number of cubes as in {\em exclusive OR}. This is exemplarily shown in 
the Karnaugh-map diagrams in Fig.~\ref{fig:off_cover}. The dotted lines shows the exact cover without multiple covering of an element. The solid lines shows the covers with 
multiple covering. Note that while multiple cover of the OFF-set in the left-hand figure decreases the number of cubes, whereas multiple covers of the ON-set elements leaves 
the number of cubes intact but, increases the sizes of the cubes, thus reducing literals. Based on this, we propose the following formulation of ESOP minimization.
\end{sloppypar}

\begin{sloppypar}
ESOP minimization is a minimum set covering problem, where the union is defined as following. \\
$Z = X \cup Y$, where \\
$e \in Z$ if $e \in X \wedge e \notin Y$ \\
$e \in Z$ if $e \in Y \wedge e \notin X$ \\
We refer to the problem as XOR-SET-COVER.
Formally, the XOR-SET-COVER problem is:
\begin{description}
 \item \textbf{INPUT:} $\langle \mathcal{U}, \mathcal{A}, \mathcal{K} \rangle$ where $\mathcal{A}$ is a set of subsets, consisting of members from $\mathcal{U}$ and $\mathcal{K} \in \mathbb{N}$
 \item \textbf{QUESTION:} Does $\mathcal{U}$ have a $\mathcal{A}$-cover of size $\mathcal{K}$, where the union of two sets is defined as above ?
\end{description}
\end{sloppypar}

\begin{claim}
XOR-SET-COVER is in NP. 
\end{claim}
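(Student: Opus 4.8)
The plan is to exhibit a deterministic polynomial-time verifier that accepts a short certificate exactly when the instance is a yes-instance, which is the standard route to establishing membership in NP. Given an input $\langle \mathcal{U}, \mathcal{A}, \mathcal{K} \rangle$, the natural certificate is a candidate cover, that is, a subcollection $\mathcal{B} \subseteq \mathcal{A}$ with $|\mathcal{B}| \le \mathcal{K}$, presented as a list of at most $\mathcal{K}$ indices into $\mathcal{A}$. Since each index is written in $O(\log |\mathcal{A}|)$ bits and $\mathcal{K} \le |\mathcal{A}|$, the certificate has length polynomial in the size of the input.

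The verifier then checks two conditions. First, it confirms that $\mathcal{B}$ has cardinality at most $\mathcal{K}$ and that every listed index is valid. Second, it computes the XOR-union $\bigcup_{S \in \mathcal{B}} S$ under the symmetric-difference semantics defined above and tests whether the result equals $\mathcal{U}$. The observation that makes this well defined is that the XOR-union is precisely the symmetric difference, which is associative and commutative; hence the outcome is independent of the order in which the sets of $\mathcal{B}$ are combined. Concretely, an element $e \in \mathcal{U}$ belongs to the XOR-union if and only if it occurs in an odd number of the sets of $\mathcal{B}$. The verifier can therefore, for each $e \in \mathcal{U}$, tally its multiplicity across the sets in $\mathcal{B}$ and accept only when every element of $\mathcal{U}$ has odd multiplicity.

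For the running time, computing these parities amounts to scanning at most $\mathcal{K} \le |\mathcal{A}|$ sets, each containing at most $|\mathcal{U}|$ elements, so the tallies are obtained in $O(|\mathcal{A}| \cdot |\mathcal{U}|)$ time, polynomial in the input length. Correctness is then immediate in both directions: if a valid XOR-cover of size at most $\mathcal{K}$ exists, presenting it as the certificate causes the verifier to accept; conversely, the verifier accepts only when the supplied $\mathcal{B}$ genuinely XOR-covers $\mathcal{U}$. Hence XOR-SET-COVER is in NP. I do not anticipate a genuine obstacle here; the only point requiring care is to justify that the symmetric-difference fold is order-independent, so that the parity-counting test faithfully realizes the XOR-union, with everything else being routine bookkeeping.
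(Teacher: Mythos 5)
Your proposal is correct and follows essentially the same route as the paper: the certificate is the candidate subcollection $\mathcal{B} \subseteq \mathcal{A}$ of size at most $\mathcal{K}$, and the verifier checks that every element of $\mathcal{U}$ is covered an odd number of times, all in polynomial time. Your version merely spells out the details (certificate encoding, order-independence of the symmetric difference, and the explicit $O(|\mathcal{A}|\cdot|\mathcal{U}|)$ bound) that the paper leaves implicit.
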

\begin{proof}
Let us say, $\mathcal{B}$ is a proposed $\mathcal{A}$-cover solution. We propose the following verification flow.
\begin{itemize}
 \item $\mathcal{B}$ is subset of $\mathcal{A}$
 \item $\left\vert \mathcal{B} \right\vert$ $\leq$ $\mathcal{K}$
 \item $\forall u \in \mathcal{U} \exists B \in \mathcal{B} \mid u \in B \wedge count(u)$ is odd.
\end{itemize}
The verifier runs in time polynomial in the length of the $\mathcal{B}$.
\end{proof}

\begin{theorem}
XOR-SET-COVER problem is NP-Complete.
\end{theorem}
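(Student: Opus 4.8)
The plan is to complete the NP-completeness proof by establishing NP-hardness, since membership in NP is already supplied by the preceding Claim. I would do this via a polynomial-time many-one reduction from \emph{Exact Cover by 3-Sets} (X3C), a classical NP-complete restriction of the exact cover problem discussed above: given a universe $X$ with $|X| = 3q$ and a collection $C$ of three-element subsets of $X$, decide whether some subcollection of $C$ covers each element of $X$ exactly once.

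Given an X3C instance $\langle X, C\rangle$, I would construct the XOR-SET-COVER instance $\langle \mathcal{U}, \mathcal{A}, \mathcal{K}\rangle$ with $\mathcal{U} = X$, $\mathcal{A} = C$, and $\mathcal{K} = q$. This map is plainly computable in polynomial time, and it then remains to prove that the X3C instance is a \emph{yes}-instance if and only if the constructed XOR-SET-COVER instance is.

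For the forward direction, an exact cover $C' \subseteq C$ covers every element exactly once; since $1$ is odd, $C'$ already satisfies the XOR-cover condition, and because each member of $C$ has three elements while the $3q$ elements are each covered once, necessarily $|C'| = q = \mathcal{K}$. The reverse direction is the crux and rests on a counting argument. Let $\mathcal{B}$ be any XOR-cover with $|\mathcal{B}| \le \mathcal{K} = q$. Counting element-incidences, the size bound yields at most $3q$ incidences, while the XOR-cover condition forces every one of the $3q$ elements to appear an odd (hence at least one) number of times, yielding at least $3q$ incidences. Equality must therefore hold, which pins every multiplicity to exactly one; thus $\mathcal{B}$ is an exact cover and the original X3C instance is a \emph{yes}-instance.

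I expect the main obstacle to be precisely this reverse direction. The XOR-cover condition demands only \emph{odd} multiplicities, so an XOR-cover is not a priori an exact cover; the reduction succeeds only because using equal-cardinality (three-element) sets lets the cardinality budget $\mathcal{K} = q$ squeeze the incidence count until every odd multiplicity collapses to one. Reducing instead from unrestricted set cover, where sets may have different sizes and elements may legitimately be covered an even number of times, would break this squeeze and force the introduction of additional parity-correcting gadgets; selecting X3C is what keeps the argument clean.
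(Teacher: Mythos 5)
Your proof is correct, but it takes a genuinely different route from the paper's. The paper argues NP-hardness by a Turing (oracle) reduction from general EXACT SET COVER: it posits a decision box $M_{XS}$ for XOR-SET-COVER, queries it on the given instance and on polynomially many perturbed instances (obtained by deleting one element from one set at a time), and concludes from any answer of $1$ that an exact cover exists. That last inference is precisely the point you flag as the crux: an XOR-cover guarantees only \emph{odd} multiplicities, not multiplicity one, so a $1$-answer does not certify an exact cover. (Concretely, let $\mathcal{U}$ be the seven points of the Fano plane and $\mathcal{A}$ its seven lines; the seven lines together cover each point exactly three times, hence form an XOR-cover, yet no exact cover exists at all, since $3 \nmid 7$ and any two lines intersect.) Your many-one reduction from X3C sidesteps this difficulty entirely: because every set has exactly three elements and the budget is $\mathcal{K} = q$, the incidence count is squeezed to exactly $3q$, forcing every odd multiplicity down to one, so XOR-covers and exact covers coincide on the constructed instances. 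What your approach buys is a self-contained Karp reduction whose two directions are both airtight (given the standard NP-completeness of X3C, which here plays the role of Karp's EXACT COVER cited in the paper); the paper's oracle-based route, even where it succeeds, would establish only the formally weaker Cook/Turing hardness. One detail worth making explicit when you write this up: the paper's QUESTION asks for a cover ``of size $\mathcal{K}$'' while its NP-membership verifier checks $|\mathcal{B}| \le \mathcal{K}$; your counting argument works verbatim under either reading, and saying so closes the loop with the membership Claim you invoke.
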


\begin{proof}
Suppose we have a black box $M_{XS}$ which takes a three tuple $\langle U, A, K \rangle$ as input and returns $1$ \textit{iff} $U$ has a XOR SET $A$ COVER of size $K$ otherwise 
it returns $0$. Here, $U,A,K$ is same as defined previously. Now using $M_{XS}$ as a subroutine we have to solve the EXACT SET COVER problem. So, the input for EXACT SET COVER 
problem is also a three tuple $\langle U,A,K \rangle$ and we have to answer whether $U$ has a set $A$-cover of size $K$. If for any input $\langle U,A,K \rangle$, the $M_{XS}$ gives 
$1$ as output, then we can conclude that $\langle U,A,K \rangle$ is an exact set cover. Now the problem is when $M_{XS}$ gives $0$ as output. Then there are two following possibilities.

\begin{enumerate}
\item $U$ has exact $A$-set cover of size $K$, let $S_K \subseteq A$ be a solution i.e, union of the elements of $S_K$ gives the set $U$. As $M_{XS}$ is giving the 
output $0$, so, we can say that if we perform the $XOR$ operation defined previously among the elements of $S_K$, then there exists at least one element $u \in U$, which is present 
in even number of elements of $S_K$.
\item $U$ doesn't have exact $A$-cover of size $K$. 
\end{enumerate} 

Among the above two possibilities, for the first possibility, we can take an element $u \in U$ and remove it from one element of $A$, say $A_u ^i$ be the new set of subsets 
and check whether $U$ has xor set $A_u ^i$-cover of size $K$. Then, we can repeat this process for each element of $A$ for a fixed $u$ and then repeat this process for each 
element $u \in U$. If for every input $M_{XS}$ gives $0$ as output then conclude that $U$ doesn't have exact $A$-set cover of size $K$ otherwise, if for at least one single query 
to $M_{XS}$ it give the output $1$ conclude that $U$ has exact $A$-set cover of size $K$. So, here we have reduced the EXACT SET COVER problem to XOR-SET-COVER problem. As, 
EXACT SET COVER is known NP-complete problem and XOR-SET-COVER is in NP, so XOR-SET-COVER  problem is NP-complete.
\end{proof}

\begin{sloppypar}
It is important to show how the XOR-SET-COVER is applicable to the problem at hand. We start by enumerating all possible cubes w.r.t. $n$ input variables of the Boolean function 
$f$. We arrange a coverage matrix, where all possible cubes ($3^n$) are arranged row-wise and the cubes from the canonical SNF or ANF are arranged column-wise. Let us assume 
that we have $\mathcal{I}$ and $\mathcal{J}$ elements in the ON-set and in the OFF-set respectively. Further, the cubes belonging to the OFF-set are also added to the columns. 
Hence, we get a $3^n \times 3^n$ coverage matrix. The elements of the matrix are assigned a value of $0$ if a row-cube does not cover a column-cube. Otherwise, the element 
value is assigned to be $1$ if the column-cube belongs to the ON-set and a value of $-1$ if it belongs to the off-set. An exemplary coverage matrix is shown in 
Table~\ref{tab:coverage_matrix}, where the ON-set column headers are marked bold. 
\end{sloppypar}

\begin{sloppypar}
A valid solution for the coverage matrix is a XOR-SET-COVER, where the columns belonging to ON-set can be replicated odd number of times and the columns belonging to the 
OFF-set can be replicated even number of times. Note that, there is a upper limit of replication~\cite{steinbach_snf}.The ON-set cubes can be replicated up to $2^n-1$ 
times and OFF-set cubes can be replicated up to $2^n$ times.
\end{sloppypar}

\begin{table}[h]
    \begin{center}
        \caption{Exemplary $2$-variable Coverage Matrix~\cite{steinbach_snf}}
        \label{tab:coverage_matrix}
        \begin{tabular}{c|c|c|c|c|c|c|c|c|c} \hline 
    & - -   & \textbf{-0}   & -1 & \textbf{0-} & 00 & \textbf{01} & \textbf{1-} & \textbf{10} & \textbf{11} \\ \hline
- - & 0     & 0  & 0     & 0  & $-$1  & 1  & 0  & 1  & 1  \\
-0  & 0     & 0  & 0     & 1  & 0     & 1  & 1  & 0  & 1  \\
-1  & 0     & 0  & 0     & 1  & $-$1  & 0  & 1  & 1  & 0  \\
0-  & 0     & 1  & $-$1  & 0  & 0     & 0  & 0  & 1  & 1  \\
00  & $-$1  & 0  & $-$1  & 0  & 0     & 0  & 1  & 0  & 1  \\
01  & $-$1  & 1  & 0     & 0  & 0     & 0  & 1  & 1  & 0  \\
1-  & 0     & 1  & $-$1  & 0  & $-$1  & 1  & 0  & 0  & 0  \\
10  & $-$1  & 0  & $-$1  & 1  & 0     & 1  & 0  & 0  & 0  \\
11  & $-$1  & 1  &  0    & 1  & $-$1  & 0  & 0  & 0  & 0  \\
\hline 
        \end{tabular}
    \end{center}
\end{table}

\begin{sloppypar} 
Fig.~\ref{fig:one_cover} shows an example for $4$ or $6$ variable Boolean function, where the element $1$ is covered $3$ times resulting in a reduced cube count.
\end{sloppypar}

\begin{figure}[hbt]
 \begin{center}
   \includegraphics[width=80mm]{./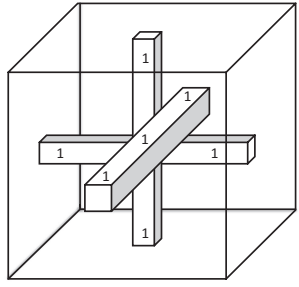}
   \caption{Multiple covers of ON-set}
   \label{fig:one_cover}
 \end{center}
\end{figure}

\begin{lemma}
To decide if bounded-ancilla reversible logic synthesis is achievable with a specified gate count is $NP-complete$. 
\end{lemma}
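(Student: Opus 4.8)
The plan is to prove the two halves of NP-completeness separately, reusing the machinery built up in this section. For the upper bound I will exhibit a polynomial-time verifier, and for the lower bound I will reduce the already-hard XOR-SET-COVER problem to the synthesis problem, routing the reduction through the coverage-matrix encoding of ESOP minimization and the established identity ``ESOP with $k$ cubes $=$ bounded-ancilla circuit with $k$ $Tof_n$ gates.''

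First I would settle membership in NP. A certificate is a list of at most $K$ generalized Toffoli gates over the $n$ primary lines and the bounded ancilla lines, each gate recorded by its target line and its set of control lines; this description has length polynomial in $n$ and $K$. The verifier confirms that the gate count is at most $K$ and then simulates the circuit on every input assignment to check that it realizes the prescribed function $f$. Because the instance is given as a truth table of length $2^n$, sweeping over all $2^n$ assignments costs time polynomial in the length of the input, so the verifier is polynomial time and the problem lies in NP.

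Next I would prove NP-hardness by a reduction from XOR-SET-COVER. Given an $n$-variable instance I would build the coverage matrix exactly as described above: the $3^n$ possible cubes supply the candidate family (the rows), while the cubes of the canonical SNF together with the OFF-set cubes supply the ground set to be covered (the columns), with ON-set columns required an odd number of times and OFF-set columns an even number of times. This is precisely the XOR union from the definition of XOR-SET-COVER, and the replication bounds inherited from the SNF ($2^n-1$ for ON-set, $2^n$ for OFF-set cubes) keep every candidate cover polynomial in the $2^n$-length input, so the reduction is polynomial time. A cover of size $K$ is an ESOP with $K$ product terms, which by the correspondence is a bounded-ancilla circuit of $K$ $Tof_n$ gates, and conversely; hence the instance has a cover of size $K$ if and only if $f$ admits a bounded-ancilla realization with $K$ gates.

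The hard part will be discharging two consistency obligations in the reduction. The first is the direction of the encoding: to transfer hardness I must ensure that a family of genuinely hard XOR-SET-COVER instances can be realized as coverage matrices of actual Boolean functions, rather than merely observing that every synthesis instance yields \emph{some} XOR-SET-COVER instance; pinning down this embedding (or, equivalently, constructing the Boolean function that induces a prescribed coverage matrix) is the delicate point. The second is order-independence: an ESOP is invariant under permutation of its product terms, whereas a circuit is a priori order-sensitive. Here the bounded-ancilla hypothesis is exactly what rescues the argument, since the ancillae keep the primary inputs unchanged, so each $Tof_n$ gate reads the same unmodified inputs and XORs its term onto an output line; the gates therefore commute and the circuit's function matches the order-free ESOP. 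Verifying this commutation, together with the remark that mixed-polarity gates absorb the intervening NOT gates without altering the count, closes the correspondence.
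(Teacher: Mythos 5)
Your overall scheme (a polynomial-time verifier for membership, plus a reduction from XOR-SET-COVER routed through the coverage matrix) is the architecture the paper intends, and your membership argument and your resolution of the order-independence issue via the commuting, ancilla-protected $Tof_n$ gates are both sound. But the proof is not complete, and you have in fact put your finger on exactly where it fails: the first of your two ``consistency obligations'' is never discharged, and it is not a formality. The coverage-matrix construction takes a Boolean function $f$ and produces an XOR-SET-COVER instance; that is a reduction \emph{from} synthesis \emph{to} XOR-SET-COVER, which shows only that bounded-ancilla synthesis is no harder than XOR-SET-COVER (useful for the NP upper bound, useless for hardness). To conclude NP-hardness of synthesis you need the opposite direction: given an arbitrary XOR-SET-COVER (or exact-cover) instance $\langle \mathcal{U}, \mathcal{A}, \mathcal{K} \rangle$, construct in polynomial time a Boolean function $f$ and a gate bound $K'$ such that $f$ admits a bounded-ancilla circuit with at most $K'$ gates if and only if the given instance is a yes-instance. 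Coverage matrices are highly structured objects --- rows and columns are indexed by cubes, and the incidence pattern is forced by cube geometry --- so there is no reason an arbitrary set system embeds as one; NP-completeness of a general problem does not transfer to a special subfamily of its instances. Exhibiting such an embedding (or a direct reduction from some other NP-hard problem to synthesis) is the actual mathematical content of the lemma, and it is absent from your proposal.

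For what it is worth, the paper's own proof of this lemma is the single sentence ``This follows the previous theorem directly,'' which silently commits the same direction-of-reduction error you flagged. So your proposal is not weaker than the paper's argument --- it is essentially the same argument with the gap made explicit rather than hidden --- but by the standard you set for yourself, neither text proves NP-hardness; only the NP-membership half is actually established.
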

\begin{proof}
This follows the previous theorem directly.
\end{proof}

\begin{lemma}
To decide if bounded-ancilla reversible logic synthesis is achievable with a specified Quantum Cost is $NP-complete$. 
\end{lemma}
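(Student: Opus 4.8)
The plan is to reduce the previous lemma (gate count NP-completeness) to the Quantum Cost version, or equivalently to show that the QC-bounded decision problem is NP-complete by the same coverage-matrix machinery. First I would establish membership in NP: given a proposed set of $Tof_n$ gates forming a bounded-ancilla circuit, a verifier checks in polynomial time that the circuit computes $f$ (via the XOR-cover condition on the coverage matrix) and that the sum of the per-gate Quantum Costs does not exceed the specified bound $\mathcal{K}$. Since each gate's QC is a fixed, technology-dependent constant and there are at most polynomially many gates to inspect relative to the certificate length, this verification is clearly polynomial.

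For NP-hardness I would exploit the structure already used for gate count. The key observation is that in the bounded-ancilla, ESOP-based realization, every product term (cube) of the ESOP maps to exactly one $Tof_n$ gate acting on the full set of $n$ control/target lines. Hence all gates in this realization are of the \emph{same type} and therefore carry the \emph{same} Quantum Cost, say $q$. Under this uniformity, a circuit with $k$ gates has total Quantum Cost exactly $k \cdot q$. The decision question ``is there a realization of QC at most $\mathcal{K}$?'' becomes ``is there a realization with at most $\lfloor \mathcal{K}/q \rfloor$ gates?'', which is precisely the gate-count decision problem shown NP-complete in the preceding lemma. Thus I would take an arbitrary instance of the gate-count problem with bound $k$, set $\mathcal{K} = k \cdot q$, and invoke the previous lemma's reduction; a solution to one instance transfers directly to the other.

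The main obstacle I anticipate is justifying the uniform-cost assumption rigorously. If the gate library is restricted so that the ESOP-to-circuit map produces only $Tof_n$ gates (as the earlier discussion on the $3 \cdot 2^{n-3}$ ESOP bound and the ``$k$ cubes $\to$ $k$ $Tof_n$ gates'' correspondence suggests), uniformity is immediate and the reduction is a clean scaling argument. If instead mixed gate widths or mixed-polarity gates are permitted, different cubes could induce gates of differing QC, and the clean multiplicative relationship breaks. In that case I would instead argue directly: the coverage-matrix XOR-SET-COVER formulation already encodes which cubes are selected, and I would attach to each selected row-cube its corresponding gate cost as a weight, turning the problem into a \emph{weighted} XOR-SET-COVER whose decision version remains NP-complete (membership in NP is unchanged, and hardness follows by setting all weights equal to recover the unweighted case of the previous theorem). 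Either way, the reduction is from a problem already established NP-complete in this section, so the remaining work is bookkeeping rather than a new combinatorial gadget.
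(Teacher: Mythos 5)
Your fallback argument is, in substance, the paper's own proof: the paper keeps the coverage-matrix/XOR-SET-COVER machinery and simply replaces the entries $1$, $-1$, $0$ by $l$, $-l$, $0$, where $l$ is the number of literals of the row-cube, so that each selected cube carries the QC of the $Tof$ gate it induces; NP-completeness is then claimed by appeal to the unweighted XOR-SET-COVER theorem, exactly as in your weighted variant. Your \emph{primary} route (uniform gate cost $q$, so that QC $= k\cdot q$ and the problem reduces to the gate-count lemma by scaling) is not the paper's argument, and the obstacle you anticipate is real in the paper's own setting: a cube with fewer literals maps to a Toffoli gate with fewer control lines, and the paper states explicitly that gates with more control lines incur higher QC, so the per-gate costs are genuinely non-uniform and the clean multiplicative relationship breaks. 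The paper's choice of per-cube weights $l$ is precisely the acknowledgment of this, so your self-correction lands you on the published proof. One caveat applies equally to your fallback and to the paper: justifying hardness of the weighted problem ``by setting all weights equal'' (or, in the paper's phrasing, because XOR-SET-COVER is NP-complete) establishes hardness of the \emph{abstract} weighted cover problem, but in instances arising from QC synthesis the weights are forced by the literal counts of the cubes rather than freely choosable; a fully rigorous reduction would have to exhibit synthesis instances whose forced weights encode the hard cases. Neither your write-up nor the paper closes that gap, so your proposal matches the paper's level of rigor while being somewhat more explicit about where the argument is delicate.
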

\begin{proof}
The decision version of bounded-ancilla reversible logic synthesis with an user-specific QC is an instance of the decision version of ESOP minimization, where the element 
values ($1$, $-1$ and $0$) are replaced with the number of literals ($l$, $-l$ and $0$) in the particular row-cube. The number of literals directly indicate the $Tof_n$ to be 
employed and hence, reflect the QC of the circuit. The decision version of ESOP minimization i.e. XOR-SET-COVER is shown to be $NP-complete$. Hence, the proof.
\end{proof}


\section{Complexity Analysis of Ancilla-free, Exact Reversible Logic Synthesis} \label{sec:ancilla_free_bounds}
\begin{sloppypar}
Exact synthesis for a reversible Boolean specification is so far proposed by modeling the synthesis as a satisfiability (SAT) problem~\cite{wille_sat} as well as modeling 
the problem in terms of symbolic reachability analysis~\cite{hung_reachability}. However, none of these works attempt the complexity analysis of reversible logic synthesis. 
\end{sloppypar}

\begin{sloppypar}
We can easily show that the ancilla-free reversible logic synthesis is in the class of NP.
\end{sloppypar}

\begin{claim}
The decision problem of whether an ancilla-free reversible circuit with $\mathcal{K}$ gates exist, for a given permutation $\pi$ is verifiable in polynomial time. Hence, the problem is in NP.
\end{claim}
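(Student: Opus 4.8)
The plan is to exhibit a polynomial-time verifier for the proposed decision problem, which by the standard characterization of NP suffices to place the problem in that class. The certificate will be a witness circuit: a sequence of at most $\mathcal{K}$ reversible gates $g_1, g_2, \ldots, g_{\mathcal{K}}$ drawn from the chosen universal library (e.g.\ $Tof_n$ and $Fred_n$), each gate specified by its type together with the indices of its control lines and target line(s). First I would bound the size of this certificate: a single gate on $n$ lines is described by $O(n)$ bits (a constant number of line indices, each of $O(\log n)$ bits, plus a constant-size type tag), so the full certificate has length $O(\mathcal{K} \cdot n)$, which is polynomial in the input size provided $\mathcal{K}$ is given in unary or is itself polynomially bounded.

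Next I would describe the verification procedure. The permutation $\pi$ over $\{0,1,\ldots,2^{n}-1\}$ is the specification; the verifier must confirm that the witness circuit computes exactly $\pi$. The natural approach is to evaluate the circuit on every one of the $2^{n}$ input patterns: for each input $x$, apply the gates $g_1,\ldots,g_{\mathcal{K}}$ in order, each application costing $O(n)$ work to read the relevant control and target lines and update the bit-vector, giving $O(\mathcal{K}\cdot n)$ time per input. Comparing the resulting output against $\pi(x)$ and checking agreement across all inputs yields the full functional equivalence check. The verifier accepts iff the circuit realizes $\pi$ and uses at most $\mathcal{K}$ gates.

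The one delicate point — and the step I expect to be the main obstacle — is the cost accounting. Evaluating the circuit on all $2^{n}$ inputs takes $O(2^{n}\cdot\mathcal{K}\cdot n)$ time, which is exponential in $n$ but \emph{polynomial in the length of the input} $\pi$, since the permutation $\pi$ itself is specified by its truth-table and therefore already has size $\Theta(2^{n}\cdot n)$. This is the key subtlety: membership in NP is defined relative to the encoding length of the problem instance, and because the reversible specification is given explicitly as a permutation over $2^{n}$ points, the exponential-in-$n$ work is in fact only linear (up to the per-input gate-evaluation factor) in the instance size. I would state this explicitly to forestall the objection that the verifier runs in exponential time. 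With the certificate shown to be of polynomial length and the verifier shown to run in time polynomial in the length of $\pi$ (and $\mathcal{K}$), the problem meets the definition of NP, completing the argument.
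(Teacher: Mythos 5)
Your proposal is correct and follows essentially the same route as the paper: the certificate is the gate sequence itself, and the verifier simulates the circuit to confirm it realizes $\pi$ (the paper phrases this as computing the induced permutation $\sigma$ and checking $\sigma^{-1}\cdot\pi = I$), with the running time measured against the truth-table encoding of $\pi$. Your explicit treatment of the encoding-size subtlety --- that the $O(2^n \cdot \mathcal{K} \cdot n)$ simulation cost is polynomial in the instance length $\Theta(2^n \cdot n)$ --- is a welcome elaboration of what the paper compresses into ``polynomial in the cardinality of $\pi$,'' but it is the same argument.
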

\begin{proof}
Let us say, $\mathcal{C}$ is a proposed $\mathcal{K}$-gate circuit. We propose the following verification flow.
\begin{itemize}
 \item $\mathcal{C}$ results in a permutation $\sigma$
 \item $\sigma^{-1}\cdot\pi = I$
\end{itemize}
The verifier runs in time polynomial in the cardinality of $\pi$.
\end{proof}

\begin{sloppypar}
In order to explore the complexity of ancilla-free reversible logic synthesis, we connect the reversible logic synthesis problem to sorting.
A reversible Boolean function can be defined as an ordered set of integers corresponding to a permutation $\pi$ of its domain. The reversible circuit, 
when traversed from output towards input, essentially converts the permutation $\pi$ to an Identity $I$. We explain this with an example.
\end{sloppypar}

\begin{sloppypar}
\begin{definition}
Let $S$ be an arbitrary nonempty set. A bijection of $S$ onto itself is called a permutation of $S$.
\end{definition}

Following Cauchy's two-line notation where all the mapping $x:S \rightarrow S$ can be written as 

\begin{equation}
      \left(
        \begin{array}{lcl}
        \alpha_1 \alpha_2 \alpha_3 ... \alpha_n\\
        \beta_1  \beta_2  \beta_3  ... \beta_n
        \end{array}
      \right) \\
\end{equation}

where the top row is some enumeration of the points of $S$ and $β_i$ is the image of $\alpha_i$ under $x$ for each $i$. This can be 
alternatively formalized as a functional digraph.

\begin{definition}
Given a function $f:[0,n] \rightarrow [0,n]$, the functional digraph $G(f) = (V,E)$ associated with f is a directed graph with 
$V= \left\{0,...,n\right\}$ and $E = \left\{(v,f(v)) \text{for each }v \in V\right\}$   
\end{definition}

The functional digraph of the reversible function $\left\{7,1,4,3,0,2,6,5\right\}$ is given in figure \ref{fig:func_digraph}.

\begin{figure}[hbt]
 \begin{center}
   \includegraphics[width=80mm]{./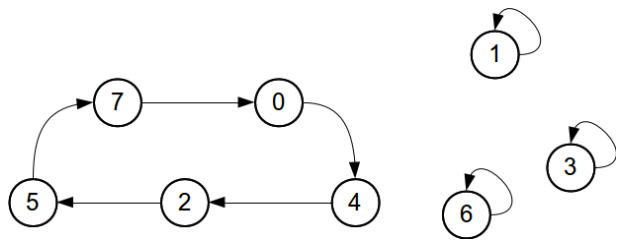}
   \caption{Functional digraph}
   \label{fig:func_digraph}
 \end{center}
\end{figure}

For a permutation $\pi$ of ${0, 1, \cdots , n}$, $G$ is a collection of disjoint cycles, since the in-degree and out-degree
of each vertex is exactly one. Each cycle can be written as a $k$−tuple $c = (a_1,a_2, \cdots, a_k)$, where $k$ is the length of
the cycle and $a_{i+1} =  c(a_i)$.

In other words, the permutation cyclically shifts all entries in $\left\{a_1,a_2, \cdots, a_k\right\}$  and keeps all other elements fixed.

$$a_1\rightarrow a_2 \rightarrow a_3 \rightarrow \cdots a_k \rightarrow a_1$$

{\em Transposition} is defined as a cycle of length $2$, when it is applied on two adjacent blocks. Otherwise, it is defined as {\em block interchange}. 
Given a permutation $\pi$, a block interchange with parameters $\{i,j,k,l\}$, where $1 \leq i < j \leq k < l \leq n+1$, is applied to $\pi$ by 
exchanging the blocks $[\pi_i,\cdots,\pi_{j-1}]$ and $[\pi_k,\cdots,\pi_{l-1}]$. A special case of this is transposition, where $j=k$.
A sorting $s$ of a permutation $\pi$ is a sequence of transpositions that transform $\pi$ into $I$, where $I$ denotes the identity element of 
$\mathbb{S}_n$, i.e., $s\cdot\pi = I$. The length of the sequence of transpositions to perform sorting is known as {\em transposition distance}, 
whereas the length of the sequence of block interchanges is called {\em minimum block interchange distance}.
\end{sloppypar}

\begin{sloppypar}
The fact that a permutation can be decomposed into transpositions was first applied into genome sequence comparison~\cite{bafna_sorting}, where 
the transposition distance ($d_t(\pi)$), i.e., number of transpositions needed to reach from one permutation to $I$ is studied. The problem of 
sorting by transposition is defined as following - {\em given a permutation $\pi$ and an integer $k$, is $d_t(\pi) \leq k$} ? We briefly state two 
results connecting reversible logic gates and sorting.
\end{sloppypar}

\begin{lemma}
When the reversible Boolean function is represented as a permutation $\pi$, the application of a $Tof_n$ gate is equivalent to a block interchange.
\end{lemma}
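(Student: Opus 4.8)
The plan is to compute the permutation of $\{0,1,\ldots,2^{n}-1\}$ that a single $Tof_n$ gate realizes and then to recognize that permutation, read on the truth-table sequence of $\pi$, as the exchange of two unit-length blocks. First I would make the gate's action explicit: a $Tof_n$ gate acts on all $n$ lines, designating one line (say the one carrying bit-position $t$) as target and the remaining $n-1$ lines as controls, and it flips the target bit exactly when every control bit matches its prescribed polarity. Since all $n-1$ non-target bits are thereby fixed, precisely one assignment of the controls meets the condition, so together with the two possible values of the target bit exactly two inputs are affected, namely $a$ and $a\oplus 2^{t}$ (with $a$ the affected input whose target bit is $0$). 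The gate swaps this pair and fixes every other input, so as an element of $\mathbb{S}_{2^{n}}$ it is the single transposition $\tau=(a\;\;a\oplus 2^{t})$. I would stress that it is precisely because all but one line serve as controls that only one pair moves; a gate with fewer controls would act as a product of several disjoint such transpositions rather than as a single one.

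Next I would translate $\tau$ into an operation on the sequence $[\pi_0,\pi_1,\ldots,\pi_{2^{n}-1}]$. Applying the gate composes $\pi$ with $\tau$, and because $\tau$ is an involution supported on the two indices $a$ and $a\oplus 2^{t}$, its net effect on the sequence is simply to interchange the two entries sitting at these positions while leaving every other entry in place. Matching this against the stated block-interchange definition, I would take the two unit-length blocks $[\pi_a]$ and $[\pi_{a\oplus 2^{t}}]$, that is, parameters with $j=i+1$ and $l=k+1$, whose exchange is exactly a block interchange. Since $a$ has target bit $0$ it is the smaller of the pair and $a\oplus 2^{t}=a+2^{t}$, so the two blocks are separated by a gap of $2^{t}-1$ positions; when $t=0$ (target on the least-significant line) the blocks abut, $j=k$, and the operation specializes to a transposition in the sense defined above, while for any $t>0$ it is a genuine non-adjacent block interchange. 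This yields the claimed equivalence.

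The step I expect to be the main obstacle is the bookkeeping in passing from $\tau$ to the sequence operation, because the two exchanged positions depend on how the gate is composed with $\pi$: reading the gate as relabelling outputs gives $(g\pi)(x)=g(\pi(x))$ and exchanges the entries at positions $\pi^{-1}(a)$ and $\pi^{-1}(a\oplus 2^{t})$, whereas reading it as permuting inputs gives $(\pi g)(x)=\pi(g(x))$ and exchanges the entries at positions $a$ and $a\oplus 2^{t}$. I would fix the convention to the one consistent with the output-to-input sorting that drives $\pi$ to $I$, and then verify that under either reading the operation remains the exchange of two single entries, so that the block-interchange conclusion is unaffected and only the precise condition isolating the adjacent (transposition) special case is convention-sensitive.
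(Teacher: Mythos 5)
Your proof is correct and follows essentially the same route as the paper's: a $Tof_n$ gate, having every non-target line as a control, moves exactly two truth-table entries---a single $2$-cycle on not-necessarily-adjacent positions---which is precisely the exchange of two unit-length blocks, i.e., a block interchange. Your write-up is considerably more rigorous than the paper's brief sketch (in particular the explicit identification of the swapped pair $a$ and $a \oplus 2^{t}$, the observation that fewer controls would yield several disjoint swaps, and the composition-convention caveat), but the underlying idea is identical.
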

\begin{proof}
Any uni-polarity/multi-polarity $Tof_n$ gate introduces change in exactly one bit position, which is nothing but exchange of two elements in $S$. 
This is equivalent to a $2$-cycle, though not necessarily for the adjacent elements. Hence, a $Tof_n$ gate introduces a block interchange.
\end{proof}

\begin{lemma}
A block interchange defined by the parameters $\{i,j,k,l\}$, where $j = i+1, j<k, l=k+1$ incurs a definite non-negative cost in terms of $Tof_n$ gate count.
\end{lemma}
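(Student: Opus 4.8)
The plan is to read the special block interchange as a single transposition and then count the $Tof_n$ gates that the preceding lemma forces. When $j=i+1$ and $l=k+1$, both blocks $[\pi_i,\dots,\pi_{j-1}]$ and $[\pi_k,\dots,\pi_{l-1}]$ collapse to the single elements $\pi_i$ and $\pi_k$, and the hypothesis $j<k$ merely records that these two elements are non-adjacent in the array. Hence the operation is exactly the transposition $\tau=(i\ k)$ of the index set. By the previous lemma a single $Tof_n$ gate realises a $2$-cycle swapping two indices that differ in exactly one bit, i.e.\ a transposition between vertices at Hamming distance $1$ in the Boolean hypercube on the indices. The quantity to be bounded is therefore the number of such Hamming-distance-$1$ transpositions whose product equals $\tau$.

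First I would fix a shortest hypercube path $i=w_0,w_1,\dots,w_d=k$, where $d$ is the Hamming distance between the binary encodings of $i$ and $k$ and consecutive vertices differ in a single coordinate. Iterating the conjugation identity $(a\ c)=(a\ b)(b\ c)(a\ b)$ along this path yields
\begin{equation}
(i\ k)=(w_0\ w_1)(w_1\ w_2)\cdots(w_{d-1}\ w_d)\cdots(w_1\ w_2)(w_0\ w_1),
\end{equation}
a product of $2d-1$ Hamming-distance-$1$ transpositions. Each factor is implementable by one mixed-polarity $Tof_n$ gate, and since the ascending and descending factors are mirror images the product fixes every index other than $i$ and $k$. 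This exhibits a concrete circuit and shows the cost is a well-defined, finite number, at most $2d-1$.

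It remains to note that the cost is non-negative, in fact strictly positive: $i<k$ gives $i\neq k$, hence $d\geq 1$ and at least one gate is unavoidable, so the cost lies in $\{1,\dots,2d-1\}$ and is in particular definite and non-negative, as claimed. The step I expect to be the main obstacle is the matching lower bound, namely that no product of fewer than $2d-1$ edge-transpositions can equal $\tau$ while fixing all other indices; establishing this requires a genuine invariant on the hypercube (tracking how far the two moved indices, and any disturbed intermediate indices, must travel and return) rather than the direct counting that settles the upper bound. For the present lemma, however, only definiteness and non-negativity are needed, and both follow at once from the explicit construction together with $d\geq 1$.
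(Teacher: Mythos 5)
Your proof is correct, and its core is the same construction as the paper's: a palindromic chain of Hamming-distance-$1$ transpositions, each realized by one mixed-polarity $Tof_n$ gate (target on the differing bit, controls matching the remaining bits), giving $2d-1$ gates whose mirror structure restores every element other than the two being exchanged. The paper phrases this as transforming one element into the other bit-by-bit and then replaying the gates in reverse order with the middle gate shared; your iterated conjugation identity $(a\ c)=(a\ b)(b\ c)(a\ b)$ is a cleaner algebraic statement of exactly that circuit.

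The one substantive difference is where the hypercube lives. You decompose the position transposition $(i\ k)$, so your $d$ is the Hamming distance between the \emph{indices} $i$ and $k$; this amounts to composing the gates on the input side of the permutation, $\pi\circ(i\ k)$. The paper instead swaps the \emph{values} $\pi_i$ and $\pi_k$ (post-composition by the transposition of values), so its gate count is $2H(\pi_i,\pi_k)-1$. Both operations exchange the same two entries of the sequence, so both prove the lemma as stated, but they exhibit different numerical costs; the paper's subsequent formulation of ancilla-free synthesis as sorting by minimum-cost block interchange adopts $Cost(B_i,B_j)=2h_{i,j}-1$ with $h_{i,j}$ the value-based distance, so your index-based convention, if carried forward, would induce a different cost model. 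Finally, your explicit caveat that only the upper bound $2d-1$ is established (a matching lower bound would need a separate invariant argument) is more careful than the paper's own wording, which asserts that $2h_{i,k}-1$ gates are ``needed'' although its proof, like yours, only constructs a circuit of that size --- and indeed only definiteness and non-negativity are required by the lemma.
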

\begin{proof}
The block interchange swaps two elements $\pi_i$ and $\pi_k$. Let the number of bits required to represent an element of the permutation be $n$. A multi-polarity 
$Tof_n$ gate will be active to a permutation element iff all positive and negative control lines evaluate to one and zero correspondingly. In that case, the 
target bit is inverted. Hence a $Tof_n$ gate can perform a block interchange of two permutation elements with a Hamming distance of $1$. Let the Hamming distance 
between $\pi_i$ and $\pi_k$ be $H(\pi_i, \pi_k) = h_{i,k}$. 
To achieve the swap, we first transform $\pi_i$ to $\pi_k$ by applying a series of $Tof_n$ gates. We identify the bits of $\pi_i$ that differ from corresponding bits 
of $\pi_k$. For each bit, we apply a $Tof_n$ gate whose target line maps to that bit and the control lines corresponds to other bits of $\pi_i$. After each step, the 
Hamming distance will decrease by $1$. After $h_{i,k}$-th step, the combined circuit will transform $\pi_i$ to $\pi_k$. The $h_{i,k}$-th $Tof_n$ gate will also invert 
a bit of $\pi_k$ as the Hamming distance between $\pi_k$ and intermediate output of $\pi_i$. If we apply these series of $Tof_n$ gates in reverse order it will transform 
$\pi_k$ to $\pi_i$. As there is one common gate, total number of gates needed are $2h_{i,k}-1$.

As we are applying each $Tof_n$ gate twice, once in direct and once in reverse order, this will cancel out the introduced changes in the permutation elements other 
than $\pi_i$ and $\pi_k$.
\end{proof}

\begin{sloppypar}
Since the introduction of sorting by transpositions by Bafna and Pevzner~\cite{bafna_sorting}, it has been studied with the context of varying applications, e.g., genome 
sequencing and rank permutation coding. The special case of sorting by block interchange is shown to be solvable in polynomial time~\cite{christie_sorting_block}. However, 
this is not applicable to reversible logic synthesis since, even though each $Tof_n$ gate introduces a block interchange, the block interchange required by the polynomial-time 
algorithm may not be realizable with a single $Tof_n$ gate. Furthermore, the algorithm in~\cite{christie_sorting_block} does not include the cost of a block interchange.
For reversible logic synthesis, the cost for different block interchanges could be different.
\end{sloppypar}

\begin{sloppypar}
The problem of sorting by transposition is shown to be NP-hard~\cite{sorting_difficult}. This is, again, not directly applicable to the case at hand since, sorting by transposition 
attempts to find {\em minimum transposition distance} as well as imposes a restriction that the sorting needs to be achieved by transpositions only. Whereas, for the reversible 
logic synthesis, the output Boolean function, could be sorted by a series of block interchanges. With this context, we define the following problem, which is mappable to reversible 
logic synthesis directly.
\end{sloppypar}

\begin{sloppypar}
Ancilla-free reversible logic synthesis is {\em sorting by minimum-cost block interchange problem}, where the cost of exchanging two single-element blocks and the block cardinality 
are as following. \\
$B_i = \{\pi_i\}, B_j = \{\pi_j\}$, \\
$Cost(B_i, B_j) = 2h_{i,j}-1$, and \\
$\left\vert{B}\right\vert = 1$
\end{sloppypar}

\begin{sloppypar}
A related problem of {\em sorting by cost-constrained transpositions}~\cite{sorting_cost_constrained_tp} is studied recently, for which the complexity is yet to be established. 
\end{sloppypar}

\section{Shortest Path Problem Formulation} \label{sec:shortest_path}
\begin{sloppypar}
The most prominent heuristics for ancilla-free reversible logic synthesis adopt a branch-and-bound heuristic~\cite{esop_gupta} or apply repeated transformations~\cite{mmd,chander_la_mmd}. 
The former approach is built on top of an ESOP synthesis, hence reduces to the previously identified complexity. The latter approaches do not link their result to optimality and hence we 
refrain from the corresponding complexity analysis. In the following, we present reversible logic synthesis as a shortest path problem. This has been briefly explored 
in ~\cite{markov_modular_mult} without the complexity analysis. 
\end{sloppypar}

\begin{sloppypar}
It is shown in~\cite{wille_sat} that for an $n$-variable Boolean function, there exist $n\cdot2^{n-1}$ Toffoli gates. Along similar lines, the total number of Fredkin gates 
can be shown to be $\binom{n}{2}\cdot2^{n-2}$. The following analysis is done for Toffoli gates, with further extensions to Fredkin or other reversible gate families being trivial.
\end{sloppypar}

\begin{sloppypar}
For the ancilla-free reversible logic synthesis, we construct a graph $\mathcal{G} = {\mathcal{V, E}}$, where $\mathcal{V}$ is the set of vertices representing an $n$-variable Boolean 
function as a permutation function over the truth value set $\{0, 1, \ldots 2^{n-1}\}$. An edge from the set of edges ($\mathcal{V}$) represent the application of a $Tof_n$ gate. Since 
the gates are reversible, the edges are not directed. Note that, the set of all permutation functions $\mathcal{S(P)}$ holds \textit{closure} under the application of $Tof_n$ gate. This 
implies that the $\mathcal{G}$ contains cycles. An algorithm for constructing $\mathcal{G}$ is provided as a pseudo-code in the following algorithms~\ref{alg:construct_graph} 
and~\ref{alg:connect_child}.
\end{sloppypar}

\begin{algorithm}
\SetKwInOut{Input}{input}
\SetKwInOut{Output}{output}
\caption{$Create\_Graph$}
\Input{$\mathcal{S(P)}$}
\Output{$\mathcal{G}(\mathcal{V, E})$}
\While{$\mathcal{S(P)}$ $\neq$ $\emptyset$}
{
  $v_{curr} = Create\_Vertex(p \in S(P))$;\\
  $Connect\_Child(v_{curr}$);\\
  \Return $\mathcal{G}(\mathcal{V, E})$;
}
\label{alg:construct_graph}
\end{algorithm}
\decmargin{1em}

\begin{algorithm}
\SetKwInOut{Input}{input}
\SetKwInOut{Output}{output}
\caption{$Connect\_Child$}
\Input{$v_{curr}$}
 $\mathcal{S(P)}$ $\rightarrow$ $remove(perm(v_{curr}))$;\\
 $\mathcal{G} \rightarrow append(v_{curr})$;\\
\lForEach{$gate$ $\in$ $Tof_n$}{
 $v_{res}$ $=$ $Tof_n$($v_{curr}$);\\
 $Connect(v_{res}, v_{curr})$;\\
}
\If{($v_{res}$ $\notin$ $\mathcal{G}$)}{
  $\mathcal{S(P)}$ $\rightarrow$ $remove(v_{res})$;\\
  $Connect\_Child(v_{res})$;\\
}
\label{alg:connect_child}
\end{algorithm}
\decmargin{1em}

\begin{sloppypar}
The function $Create\_Graph$ iterates over all possible permutations $\mathcal{S(P)}$. For each permutation $p$, a vertex is created. For that vertex, all possible $Tof_n$ gate 
is applied. The resulting new permutation is stored in a vertex ($v_{res}$) and connected via an edge to the current vertex. If $v_{res}$ is not already included in $\mathcal{G}$, 
then the function $Connect\_Child$ is called recursively.
\end{sloppypar}

\begin{sloppypar}
The total number of nodes in $\mathcal{G}(\mathcal{V, E})$ is same as the total number of permutation functions for an $n$-variable Boolean function, which is ${2^n}!$. From the 
aforementioned constructed graph, determining the optimal reversible circuit for a given permutation $p_r$ function is nothing but the determination of shortest path from the 
permutation $p_r$ to the input permutation $p_i$. The edges can be assigned with a weight of $1$ or the weight of QC corresponding to the transformation for determining the 
reversible circuit with minimum gate count or minimum QC respectively. The complexity of single-source shortest path computation with edges having non-negative weight is 
$\bigO(E + VlogV)$, where $E$ is the edge count and $V$ is the vertex count~\cite{shortest_path_tarjan}. Clearly, the complexity in our case is exponential. Combining previous 
results, the complexity of optimal ancilla-free reversible logic synthesis for Toffoli network is $\bigO((2^n!)\cdot n\cdot2^{n-1} + (2^n!)log(2^n!))$. 
\end{sloppypar}

\begin{sloppypar}
Figure~\ref{fig:rev_spp} shows parts of a $\mathcal{G}(\mathcal{V, E})$ for a $2$-variable Boolean function. The nodes represent the permutation functions and the edges represent 
the application of different reversible gates. For determining the optimal gate count, the edge weight can be disregarded (every edge weight is $1$) and a simple breadth-first 
search with complexity $\bigO(E)$ can be performed, where $E$ = $(2^n!)\cdot n\cdot2^{n-1}$. For determining minimum QC, however, the algorithm from~\cite{shortest_path_tarjan} 
needs to be applied, which is of higher complexity. 
\end{sloppypar}

\begin{figure}[hbt]
\centering
    \includegraphics[width=80mm]{./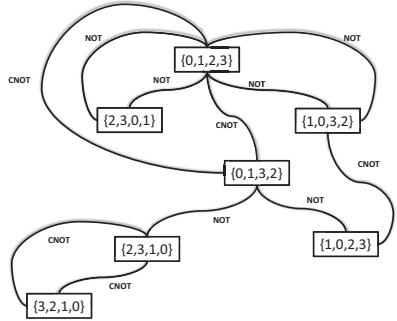}
    \caption{Reversible Logic Synthesis via Shortest Path}
    \label{fig:rev_spp}
\end{figure}

\section{(0-1)-ILP Formulation} \label{sec:ilp_formulation}
\begin{sloppypar}
In this section we present another formulation of ancilla-free reversible logic synthesis as an optimization problem using ILP. An Integer Linear Program (ILP) takes the following form
\begin{equation}
\begin{split}
minimize\: c^Tx \: \\ 
subject\:  to\:  Ax >= b
\end{split}
\end{equation}
, where $A$ is an integer $m\times n$ matrix, $c$ is an integer $n$-vector, $b$ is an integer $m$-vector. The values of $x$, an integer $n$-vector to be determined. When the solution 
set is restricted to either $0$ or $1$, then it becomes a (0-1) ILP, which is known to be NP-complete~\cite{karp_np_complete}.
The (0-1) ILP formulation for reversible logic synthesis has several interconnected modules for expressing the constraints, which are explained in the following.
\end{sloppypar}

\begin{sloppypar}
\textit{Initialization Module}: Corresponding to $n$ variables, $n$ target binary vectors of $2^n$ length are derived. Each target vector ($t_v$) is assigned 
a truth-table expansion corresponding to an input variable.
\end{sloppypar}

\begin{sloppypar}
\textit{Target Selection Module:} All the target binary vectors are interchanged to derive new set of target binary vectors. This permutation network is ensured to preserve the 
exclusivity of the targets. The last line is selected as the target for the following operation. In case of $Fred_n$ gate, the last two lines are selected as the target.
\end{sloppypar}

\begin{sloppypar}
\textit{$Tof_n$ Module:} All but the last target lines are operated on with a conditional AND, where the condition is specified in a vector $c_{tof}$. Based on the result of the 
AND operation, new target binary vectors are derived. Control lines are transported unchanged to new target binary vectors as well. Note that, if the control condition is $0$, the 
AND contributor for that particular control line should be $1$ in order not to disturb the final control value. This reduces to the following specification for a 
control contribution, where $d$ indicates the current depth.
\begin{equation}
\bigwedge_{i=1}^{n-1}(t_v[d][i] == 1)\: \vee \: (c_{tof}[d][i] == 0) 
\end{equation}
This can be interpreted by the optimizer as a NOT gate, when the $c_{tof}$ contains $0$ in each element. To keep the gate count for such assignment and drive the optimizer, 
an additional element $c_{en}$ is introduced as following.
\begin{equation}
c_{en}[d] \: \wedge \: \bigwedge_{i=1}^{n-1}(t_v[d][i] == 1)\: \vee \: (c_{tof}[d][i] == 0) 
\end{equation}
The operational modules are connected via a target selection module. This allows for a compact formulation. The modules are repeated $D$ times, which is maximum possible depth 
for a given variable count, before assigning the target to the output Boolean specification to be synthesized. 
\end{sloppypar}

\begin{sloppypar}
\textit{$Fred_n$ Module:} Similar formulation applies to $Fred_n$ gates as well, where the control values are taken from all but last two target vectors. The condition vectors 
are maintained using $c_{fred}$, which is enabled with an element $c_{en}$ for the particular depth.
\end{sloppypar}

\begin{sloppypar}
The optimization goal can be specified as minimization of gate counts or QC, both of which are based on the condition vector $c_{tof}$. For QC minimization, the $c_{tof}$ condition 
vector is added to identify the number of control lines and corresponding QC value is applied. For gate count 
minimization, the condition vector is checked to contain a single truth-value or not. Correspondingly, a gate is inferred.
\end{sloppypar}

\section{Experimental Results} \label{sec:results}
\begin{sloppypar}
Both the shortest path approach and ILP formulation explained are implemented as standalone C++ program. For the ILP formulation, the API calls for the commercial 
ILP solver IBM CPLEX version 12.2~\cite{ibm_cplex} are utilized. For both the techniques, experiments are done using Intel(R) Core(TM) i7-2670QM CPU @ 2.20GHz machine with 4 GB RAM, 
running Red Hat Linux version 2.6.32-358.6.2.el6.x86\_64.
\end{sloppypar}

\begin{table}[h]
\scriptsize
    \begin{center}
        \caption{Execution Time for Selected Benchmarks}
        \label{tab:ilp_sat}
        \begin{tabular}{c|c|c|c|c}
\hline 
\multirow{2}{*}{Function} & Variable & Gate  & \multicolumn{2}{|c}{Execution Time (seconds)} \\ \cline{4-5}
            & Count    & Count & SMT~\cite{wille_smt} & (0-1) ILP \\ \hline
Peres      	& 3	& 2	&  $<$ 0.01	& 2.16 \\ 
Fredkin		& 3	& 3	&  $<$ 0.01	& 7.19 \\ 
Miller		& 3	& 5	&  $<$ 0.01	& 1536.68 \\
\hline 
        \end{tabular}
    \end{center}
\end{table}

\begin{sloppypar}
For benchmarking the efficiency of the ILP solver against state-of-the-art exact approaches, we compare the execution times with those reported in~\cite{wille_smt}. In~\cite{wille_smt}, 
a Satisfiability Modulo Theory (SMT)-based solver with function-specific optimizations have been used. The results are presented in table~\ref{tab:ilp_sat}. Clearly, (0-1) ILP does 
not provide a practical solution to reversible logic synthesis, despite its capability to explore the optimization space rather than determining a satisfiable solution as in SAT. 
Even for the most basic examples, the runtime quickly increased. To control the runtime, the solution space is restricted in CPLEX by setting {\em IntSolLim} to 1. Even then, it 
produced the result with gap of several orders of magnitude execution time, when compared to SAT.
\end{sloppypar}

\begin{sloppypar}
In contrast, the shortest path problem formulation led to comprehensive results for complete $3$-variable Boolean functions. The flexibility of our solution allowed including 
different gates (N - NOT, C - CNOT, T - Toffoli, F - Fredkin, P - Peres) and also consider positive control and inverted control lines, indicated by '+' and '-' in 
Table~\ref{table:GC_optimal}. Table~\ref{table:QC_optimal} reports corresponding QC values. It can be observed that Fredkin gate is decomposed into Toffoli gates and therefore, 
the QC values are same for NCT and NCTF family, while reporting optimized QC. For each of these gate families, optimal gate count and optimal QC values are obtained within a 
minute for the entire set of $3$-variable functions. Note that, this study only presents the optimal gate/Quantum cost for classical gates. The detailed QC values with 
non-classical gates, as reported in~\cite{rahman_opt_qc}, is avoided because of space limitations.
\end{sloppypar}

\begin{table}[hbt]
\scriptsize
  \caption{Optimal Gate Count results for 3-variable circuits}
  \label{table:GC_optimal}
  \centering
  \begin{tabular}{|c|c|c|c|c|c|c|c|r|}
    \hline
    $GC$ & \multicolumn{2}{c|}{NCT} & \multicolumn{2}{c|}{NCF} & \multicolumn{2}{c|}{NCTF} & \multicolumn{2}{|c|}{NCTPF}\\ \cline{2-9}
             & + & +/- & + & +/- & + & +/- & + & +/-\\
    \hline
    0 & 1  & 1 & 1 & 1 & 1 & 1 & 1 & 1 \\
    1 & 12 & 27 & 12 & 21 & 15 & 33 & 27 & 81\\
    2 & 102 & 369 & 101 & 280 & 143 & 544 & 413 & 2620\\
    3 & 625 & 2925 & 676 & 2422 & 1006 & 4980 & 3910 & 26437 \\
    4 & 2780 & 13282 & 3413 & 11229 & 5021 & 20223 & 17678 & 11181\\
    5 & 8921&  20480 & 11378 & 18689 & 15083 & 14175 & 18073 & 0 \\
    6 & 17049 & 3236 & 17970 & 7558 & 17261 & 364 & 218 & 0 \\
    7 & 10253 & 0 & 6739 & 120 & 1790 & 0 & 0 & 0 \\
    8 & 577 & 0 & 30 & 0 & 0 & 0 & 0 & 0 \\
    \hline
    Avg. & 5.87 & 4.58 & 5.66 & 4.77 & 5.33 & 4.22 & 4.34 & 3.21 \\
    \hline
  \end{tabular}
\end{table}

\begin{table}[hbt]
\scriptsize
  \caption{Optimal QC results for 3-variable circuits}
  \label{table:QC_optimal}
  \centering
  \begin{tabular}{|c|c|c|c|c|c|c|c|r|}
    \hline
    $QC$ & \multicolumn{2}{c|}{NCT} & \multicolumn{2}{c|}{NCF} & \multicolumn{2}{c|}{NCTF} & \multicolumn{2}{|c|}{NCTPF}\\ \cline{2-9}
             & + & +/- & + & +/- & + & +/- & + & +/-\\
    \hline
    0 & 1 & 1 & 1 & 1 & 1 & 1 & 1 & 1\\
    1 & 9 & 15 & 9 & 15 & 9 & 15 & 9 & 15\\
    2 & 51 & 117 & 51 & 117 & 51 & 117 & 51 & 117\\
    3 & 187 & 433 & 187 & 433 & 187 & 443 & 187 & 433\\
    4 & 393 & 534 & 393 & 534 & 393 & 534 & 405 & 582\\
    5 & 477 & 240 & 474 & 228 & 477 & 240 & 609 & 744\\
    6 & 260 & 196 & 215 & 16 & 260  & 196 & 998 & 2572\\
    7 & 338 & 1320 & 17 & 6 & 338 & 1320 & 2648 & 5112\\
    8 & 1335 & 4596 & 48 & 150 & 1335 & 4596 & 4397 & 2400\\
    9 & 3224 & 3168 & 408 & 1479 & 3224 & 3168 & 2712 & 9072\\
    10 & 3686 & 204 & 1919 & 4760 & 3686 & 204 & 5994 & 8456\\
    11 & 902 & 960 & 3931 & 2893 & 902 & 960 & 10249 & 64\\
    12 & 933 & 5712 & 2634 & 120 & 933 & 5712 & 1750 & 7832\\
    13 & 4053 & 10704 & 462 & 0 & 4053 & 10704 & 3488 & 2920\\
    14 & 8690 & 1368 & 5 & 10 & 8690 & 1368 & 6640 & 0\\
    15 & 4903 & 200 & 78 & 468 & 4903 & 200 & 182 & 0\\
    16 & 244  & 2208 & 1038 & 5478 & 244 & 2208 & 0 & 0\\
    17 & 1094 & 6320 & 6079 & 10986 & 1094 & 6320 & 0 & 0\\
    18 & 4346 & 2024 & 9571 & 1874 & 4346 & 2024 & 0 & 0\\
    19 & 4724 & 0 & 2036 & 0 & 4724 & 0 & 0 & 0\\
    20 & 470 & 0 & 12 & 0 & 470 & 0 & 0 & 0\\
    21 & 0 & 0 & 0 & 0 & 0 & 0 & 0 & 0\\
    22 & 0 & 0 & 24 & 288 & 0 & 0 & 0 & 0\\
    23 & 0 & 0 & 732 & 4740 & 0 & 0 & 0 & 0\\
    24 & 0 & 0 & 5496 & 5604 & 0 & 0 & 0 & 0\\
    25 & 0 & 0 & 4482 & 120 & 0 & 0 & 0 & 0\\
    26 & 0 & 0 & 18 & 0 & 0 & 0 & 0 & 0\\
    \hline
    Avg. & 13.74 & 12.46 & 17.47 & 16.58 & 13.74 & 12.46 & 10.52 & 9.35 \\
    \hline
  \end{tabular}
\end{table}

\begin{sloppypar}
The shortest path formulation for reversible logic synthesis provides the computational viewpoint of what has been a purely memory-oriented approach in~\cite{shende_3var} 
and~\cite{golubitsky_4var}. The sheer number of nodes for $4$-variable and beyond presents a bottleneck for scaling the shortest path approach. On the other hand, this may 
provide opportunities for utilizing graph theoretic results such as approximate shortest paths.
\end{sloppypar}

\section{Conclusion and Future Work} \label{sec:conclusion}
\begin{sloppypar}
In this paper, we studied the complexity of exact reversible logic synthesis, for both the cases of the circuit being bounded-ancilla and ancilla-free. For the bounded-ancilla 
scenario, the problem is shown to be NP-hard. For ancilla-free reversible logic synthesis, we linked the problem to sorting by cost-constrained transpositions, for which 
the computational complexity is not known.
\end{sloppypar}

\begin{sloppypar}
We attempted two different formulations for solving the exact problem namely, (0-1) ILP and shortest path problem. The former approach fails to produce optimal solutions in 
acceptable time. The shortest path problem formulation is used to generate comprehensive results for $3$-variable functions, while scaling it up to larger variable sizes remain an 
interesting future work.
\end{sloppypar}


\begin{thebibliography}{10}

\bibitem{Bennett}
C.~H. Bennett, ``Logical reversibility of computation,'' {\em IBM J. Res.
  Dev.}, vol.~17, pp.~525--532, Nov. 1973.

\bibitem{maslov_benchmark}
{Reversible Benchmarks}, {\em \url{http://webhome.cs.uvic.ca/~dmaslov}}.

\bibitem{barenco}
{A. Barenco et al}, ``Elementary gates for quantum computation,'' {\em Phys.
  Rev. A}, vol.~52, pp.~3457--3467, Nov 1995.

\bibitem{wille_sat}
D.~Grosse, R.~Wille, G.~Dueck, and R.~Drechsler, ``Exact multiple-control
  toffoli network synthesis with sat techniques,'' {\em IEEE TCAD}, vol.~28,
  no.~5, pp.~703--715, 2009.

\bibitem{shende_3var}
V.~Shende, A.~Prasad, I.~Markov, and J.~Hayes, ``Reversible logic circuit
  synthesis,'' in {\em ICCAD}, pp.~353--360, 2002.

\bibitem{golubitsky_4var}
O.~Golubitsky, S.~M. Falconer, and D.~Maslov, ``Synthesis of the optimal 4-bit
  reversible circuits,'' in {\em DAC}, DAC '10, pp.~653--656, 2010.

\bibitem{rev_linear}
K.~N. Patel, I.~L. Markov, and J.~P. Hayes, ``Optimal synthesis of linear
  reversible circuits,'' {\em Quantum Info. Comput.}, vol.~8, pp.~282--294,
  Mar. 2008.

\bibitem{mmd}
D.~Miller, D.~Maslov, and G.~Dueck, ``A transformation based algorithm for
  reversible logic synthesis,'' in {\em DAC}, pp.~318--323, 2003.

\bibitem{chander_la_mmd}
{C. Chandak et al}, ``Analysis and improvement of transformation-based
  reversible logic synthesis,'' in {\em ISMVL}, pp.~47--52, 2013.

\bibitem{wille_bdd}
R.~Wille and R.~Drechsler, ``{BDD-based Synthesis of Reversible Logic for Large
  Functions},'' in {\em DAC}, DAC '09, pp.~270--275, 2009.

\bibitem{esop_gupta}
P.~Gupta, A.~Agrawal, and N.~Jha, ``An algorithm for synthesis of reversible
  logic circuits,'' {\em IEEE TCAD}, vol.~25, no.~11, pp.~2317--2330, 2006.

\bibitem{esop_cube_nayeem}
N.~M. Nayeem and J.~E. Rice, ``Improved esop-based synthesis of reversible
  logic,'' in {\em Proc. Reed-Muller Workshop}, 2011.

\bibitem{mischenko_xorcism}
A.~Mishchenko and M.~Perkowski, ``Fast heuristic minimization of
  exclusive-sums-of-products,'' in {\em Proc. Reed-Muller Workshop},
  pp.~242--250, 2001.

\bibitem{wille:rev_book}
R.~Wille and R.~Drechsler, {\em {Towards a Design Flow for Reversible Logic}}.
\newblock Springer, 2010.

\bibitem{markov_survey}
M.~Saeedi and I.~L. Markov, ``Synthesis and optimization of reversible circuits
  - a survey,'' {\em ACM Comput. Surv.}, vol.~45, pp.~21:1--21:34, Mar. 2013.

\bibitem{wille_smt}
R.~Wille and D.~Grosse, ``Fast exact toffoli network synthesis of reversible
  logic,'' in {\em ICCAD}, ICCAD '07, pp.~60--64, 2007.

\bibitem{takahashi_quant_add}
Y.~Takahashi, S.~Tani, and N.~Kunihiro, ``Quantum addition circuits and
  unbounded fan-out,'' {\em Quantum Info. Comput.}, vol.~10, pp.~872--890,
  Sept. 2010.

\bibitem{universal_qc_surface}
A.~G. Fowler, A.~M. Stephens, and P.~Groszkowski, ``High-threshold universal
  quantum computation on the surface code,'' {\em Phys. Rev. A}, vol.~80,
  p.~052312, Nov 2009.

\bibitem{kerntopf_4var_mixed_pol}
M.~Szyprowski and P.~Kerntopf, ``A study of optimal 4-bit reversible circuit
  synthesis from mixed-polarity toffoli gates,'' in {\em Nanotechnology
  (IEEE-NANO), IEEE Conference on}, pp.~1--6, 2012.

\bibitem{maslov_improved_qc}
D.~Maslov and G.~Dueck, ``Improved quantum cost for n-bit toffoli gates,'' {\em
  Electronics Letters}, vol.~39, no.~25, pp.~1790--1791, 2003.

\bibitem{kerntopf_gen_peres}
M.~Szyprowski and P.~Kerntopf, ``Low quantum cost realization of generalized
  peres and toffoli gates with multiple-control signals,'' in {\em
  Nanotechnology (IEEE-NANO), IEEE Conference on}, 2013.

\bibitem{wille_ncvv1}
Z.~Sasanian, R.~Wille, and D.~M. Miller, ``Realizing reversible circuits using
  a new class of quantum gates,'' in {\em DAC}, DAC '12, pp.~36--41, 2012.

\bibitem{masek_np_complete}
W.~J. Masek, ``{Some NP-complete set covering problems},'' tech. rep., MS
  Thesis, MIT Lab. Computer Science, MIT, Cambridge, 1978.

\bibitem{umans_complexity}
C.~Umans, T.~Villa, and A.~Sangiovanni-Vincentelli, ``Complexity of two-level
  logic minimization,'' {\em IEEE TCAD}, vol.~25, no.~7, pp.~1230--1246, 2006.

\bibitem{karp_np_complete}
R.~M. Karp, ``Reducibility among combinatorial problems,'' in {\em Complexity
  of Computer Computations} (R.~E. Miller and J.~W. Thatcher, eds.),
  pp.~85--103, New York: Plenum, 1972.

\bibitem{sasao_esop_bound}
T.~Sasao and P.~Besslich, ``On the complexity of mod-2l sum pla's,'' {\em IEEE
  Trans. Comput.}, vol.~39, pp.~262--266, Feb. 1990.

\bibitem{perkowski_helliwell}
M.~Perkowski and M.~Chrzanowska-Jeske, ``An exact algorithm to minimize
  mixed-radix exclusive sums of products for incompletely specified boolean
  functions,'' in {\em IEEE ISCAS}, vol.~2, pp.~1652--1655, 1990.

\bibitem{steinbach_snf}
B.~Steinbach and A.~Mishchenko, ``{SNF: a Special Normal Form for ESOPs},'' in
  {\em Proc. Reed-Muller Workshop}, pp.~66--81, 2001.

\bibitem{sasao_exact}
T.~Sasao, ``An exact minimization of and-exor expressions using reduced
  covering functions,'' in {\em In Proceedings of the Synthesis and Simulation
  Meeting and International Interchange}, pp.~374--383, 1993.

\bibitem{hung_reachability}
W.~N.~N. Hung, X.~Song, G.~Yang, J.~Yang, and M.~Perkowski, ``Quantum logic
  synthesis by symbolic reachability analysis,'' in {\em Proceedings of the
  41st Annual Design Automation Conference}, DAC '04, pp.~838--841, 2004.

\bibitem{bafna_sorting}
V.~Bafna and P.~A. Pevzner, ``Sorting by transpositions,'' {\em SIAM J.
  Discret. Math.}, vol.~11, pp.~224--240, May 1998.

\bibitem{christie_sorting_block}
D.~A. Christie, ``Sorting permutations by block-interchanges,'' {\em
  Information Processing Letters}, vol.~60, no.~4, pp.~165 -- 169, 1996.

\bibitem{sorting_difficult}
L.~Bulteau, G.~Fertin, and I.~Rusu, ``Sorting by transpositions is difficult,''
  {\em SIAM Journal on Discrete Mathematics}, vol.~26, no.~3, pp.~1148--1180,
  2012.

\bibitem{sorting_cost_constrained_tp}
F.~Farnoud~(Hassanzadeh) and O.~Milenkovic, ``Sorting of permutations by
  cost-constrained transpositions,'' {\em IEEE Trans. Inf. Theor.}, vol.~58,
  pp.~3--23, Jan. 2012.

\bibitem{markov_modular_mult}
I.~L. Markov and M.~Saeedi, ``Constant-optimized quantum circuits for modular
  multiplication and exponentiation,'' {\em Quantum Info. Comput.}, vol.~12,
  pp.~361--394, May 2012.

\bibitem{shortest_path_tarjan}
M.~L. Fredman and R.~E. Tarjan, ``Fibonacci heaps and their uses in improved
  network optimization algorithms,'' {\em J. ACM}, 1987.

\bibitem{ibm_cplex}
{CPLEX Optimizer}, {\em
  \url{http://www-01.ibm.com/software/commerce/optimization/cplex-optimizer/}}.

\bibitem{rahman_opt_qc}
M.~Rahman and G.~Dueck, ``Optimal quantum circuits of three qubits,'' in {\em
  Multiple-Valued Logic (ISMVL), 2012 42nd IEEE International Symposium on},
  pp.~161--166, May 2012.

\end{thebibliography}

\end{document}